\documentclass[11pt]{article}

\usepackage[T1]{fontenc}
\usepackage[utf8]{inputenc}
\usepackage{amsmath,amssymb,amsfonts}
\usepackage{graphicx}
\usepackage{booktabs}
\usepackage{float}
\usepackage{hyperref}
\usepackage{cite}
\usepackage{textcomp}
\usepackage{microtype}
\usepackage{tikz}
\usetikzlibrary{arrows.meta,positioning,calc}

\usepackage[margin=1in]{geometry}

\usepackage{setspace}
\usepackage{amsthm}
\theoremstyle{plain}
\newtheorem{proposition}{Proposition}
\theoremstyle{remark}

\hypersetup{colorlinks=true,linkcolor=blue,citecolor=blue,urlcolor=blue}

\title{Explicit Interaction Architectures for Dynamical Learning: A Controlled Study of Structural Inductive Bias}

\author{Augusto Sarti
\thanks{Dipartimento di Elettronica, Informazione e Bioingegneria (DEIB), Politecnico di Milano, Piazza L. Da Vinci 32, 20133 Italy.}
}

\begin{document}

\maketitle

\begin{abstract}
We investigate a structure-first approach to dynamical learning in which the organization of stateful interactions is prescribed explicitly rather than left entirely to a generic recurrent parameterization. We introduce causal recurrent units built from an ordered sequence of local, state-modulated transformations. The construction is motivated by wave-based interaction models, but the units studied here do not impose scattering, passivity, or energy-balance constraints.

Because fixed recurrent dynamics, designed reservoir topologies, readout-only learning, and recurrent depth are already well established, the empirical question is deliberately narrower: does the proposed interaction organization provide a useful inductive bias under controlled computational conditions? We compare a one-layer structured model, a two-layer structured model, and a generic echo-state network (ESN), all with 12 recurrent states and the same strictly linear ridge readout. Each model family receives the same random-search budget on calibration data that are disjoint from the final test data, after which the selected hyperparameters are frozen.

On a custom nonlinear identification task, the one-layer structured model attains a mean validation NMSE of $2.76\times 10^{-4}$, compared with $3.19\times 10^{-4}$ for the two-layer model and $3.94\times 10^{-4}$ for the ESN. On NARMA10 the ordering reverses: the ESN attains $0.312$, compared with $0.348$ and $0.357$ for the one- and two-layer structured models. Thus, the proposed organization can be competitive and advantageous on one task, but it is not universally superior; moreover, recurrent depth does not provide a systematic benefit under matched state dimension. The results support a task-dependent interpretation of structural inductive bias and delimit the properties that can be attributed to the recurrent factorization itself, independently of stronger physical or system-theoretic constraints.
\end{abstract}

\section{Introduction}

Learning dynamical systems from data is a central problem in signal processing, control, and machine learning. Recurrent neural networks and related deep architectures can approximate complex temporal input--output relations through flexible nonlinear parameterizations \cite{elman1990,hochreiter1997,paszke2019}. A complementary line of work introduces stronger inductive biases by constraining the model class through physical or structural principles, including neural differential equations, physics-informed learning, Hamiltonian models, and graph-based relational architectures \cite{chen2018,raissi2019,greydanus2019,battaglia2018}.

A particularly relevant precedent is reservoir computing (RC), where the internal recurrent dynamics are commonly fixed and only a readout is trained. This literature has long shown that useful temporal representations need not arise from extensive adaptation of recurrent parameters. Moreover, deterministic and strongly constrained reservoir topologies can remain effective \cite{rodan2011}, while deep and hierarchical reservoir architectures exhibit layer-dependent dynamics, memory properties, and multiple effective time scales \cite{gallicchio2017esp,gallicchio2017deep,gallicchio2018design,manneschi2021}. The effect of connectivity and topology on the dynamical repertoire of reservoirs has also been studied explicitly \cite{dale2021}. In particular, constrained topologies have been investigated directly in DeepESNs, where structured recurrent organizations can interact beneficially with depth \cite{gallicchio2019topology}. Recent universality results for simple-cycle reservoirs further emphasize that highly restricted recurrent structures can retain substantial approximation power \cite{li2024}, while newer DeepESN variants also study explicit interactions among reservoirs \cite{shang2026}.

These precedents change the novelty question addressed in the present work. We do not claim that readout-only learning, fixed recurrent representations, or benefits from recurrent depth are new phenomena. Instead, we ask a narrower architectural question: whether a computational grammar built from explicitly ordered local interactions, motivated by wave-based modeling, can provide a useful constructive design bias for dynamical learning.

We introduce a class of \emph{structured dynamical units} with internal state and a prescribed sequence of local transformations. The construction is inspired by wave digital modeling \cite{sarti1999}. In nonlinear and multiply coupled wave-digital configurations, instantaneous interaction constraints can generate delay-free algebraic loops that require topology-dependent treatments, transformations, lookup methods, or iterative solution strategies \cite{sarti1999,medine2016}. Here, causal schedulability is imposed as an architectural constraint: local stages are ordered so that the current output and next state are obtained by a finite sequence of explicit operations, and units can be composed without creating new instantaneous algebraic dependencies between layers.

Learnable wave-digital models provide important prior art. Differentiable WDF implementations have been used for data-driven parameter and component modeling \cite{chowdhury2022}; recurrent neural models have been embedded in wave-digital blocks to represent dynamical hysteresis \cite{massi2023}; Lipschitz-bounded neural scattering maps have been proposed for multi-nonlinearity WDF configurations \cite{massi2024}; and neural nonlinear-element models have been combined with biparametric WDFs while retaining explicit computability \cite{longo2026}. Against this background, the contribution examined here is the abstraction of an ordered local-interaction grammar into a generic recurrent cell, outside a specific circuit-scattering realization, together with a controlled study of the inductive bias induced by that factorization.

We evaluate the architecture in a reservoir-style protocol on two nonlinear dynamical tasks. A one-layer structured model with 12 states, a two-layer model with two six-state units, and a 12-state generic ESN are compared using the same linear ridge readout. Hyperparameters are calibrated with equal random-search budgets on data disjoint from the final test sets and are then frozen. The design controls total recurrent state dimension, readout complexity, and calibration effort, while deliberately treating the ESN as a family-level reference rather than a graph-isomorphic ablation.

The interaction maps studied here are not constrained by scattering, passivity, or energy-balance relations. The present paper therefore isolates the recurrent factorization itself and establishes what can, and cannot, be attributed to that architectural choice alone.

The main contributions of this work are:
\begin{itemize}
\item A recurrent-cell factorization into an ordered sequence of state-modulated local maps, with linear parameter and per-sample update scaling in the total structured-state dimension.
\item A feedforward stacking construction in which such units can be layered without introducing instantaneous algebraic dependencies between layers.
\item A global boundedness result for the concrete saturating state update used in the experiments, together with an explicit uniform state bound under the initialization regime considered here.
\item A controlled comparison against a generic ESN under matched total state dimension, identical linear readout complexity, disjoint calibration/test data, and equal hyperparameter-search budgets.
\item Empirical evidence that the resulting structural bias is task dependent: the one-layer structured model outperforms the ESN on the custom nonlinear system, whereas the ESN performs better on NARMA10; the two-layer structured model does not systematically improve upon the one-layer model.
\item A positioning of the architecture relative to reservoir computing and learnable wave-digital modeling, with stronger physical properties such as passivity and energy balance kept explicitly outside the claims of the present study.
\end{itemize}

\section{Structured Dynamical Unit}

We introduce a computational unit characterized by local interactions, internal state, and nonlinear modulation. The unit is organized as a sequence of local interaction mappings motivated by wave-based structured computation, but implemented here as a fully explicit recurrent computation. The term \emph{wave-inspired} refers to this organizational motivation; the maps introduced below are not required in the present study to satisfy wave-digital scattering, passivity, or energy-balance constraints.

\begin{figure}[t]
\centering
\resizebox{0.98\linewidth}{!}{%
\begin{tikzpicture}[font=\small,node distance=0.72cm and 0.78cm,
block/.style={draw,rounded corners=3pt,thick,minimum width=1.25cm,minimum height=0.72cm,align=center,fill=gray!8},
state/.style={draw,rounded corners=3pt,thick,minimum width=1.55cm,minimum height=0.72cm,align=center,fill=gray!12},
arr/.style={-Latex,thick},modarr/.style={-Latex,thick,dashed,gray!65}]
\node (input) {$a_n$};
\node[block,right=of input] (r1) {$R_1$};
\node[right=of r1] (z1) {$z_{1,n}$};
\node[block,right=of z1] (r2) {$R_2$};
\node[right=of r2] (z2) {$z_{2,n}$};
\node[block,right=of z2] (r3) {$R_3$};
\node[right=of r3] (rn) {$r_n$};
\draw[arr] (input)--(r1); \draw[arr] (r1)--(z1); \draw[arr] (z1)--(r2);
\draw[arr] (r2)--(z2); \draw[arr] (z2)--(r3); \draw[arr] (r3)--(rn);
\node[state,below=1.55cm of r2] (state) {$\xi_n$};
\node[block,below=1.55cm of rn] (phi) {$\phi$};
\node[state,right=0.85cm of phi] (next) {$\xi_{n+1}$};
\node[block,right=0.75cm of next] (mix) {$m^\top$};
\node[right=0.65cm of mix] (out) {$b_n$};
\draw[modarr] (state.north west)--(r1.south);
\draw[modarr] (state.north)--(r2.south);
\draw[arr] (state)--(phi); \draw[arr] (rn)--(phi.north);
\draw[arr] (phi)--(next); \draw[arr] (next)--(mix); \draw[arr] (mix)--(out);
\end{tikzpicture}%
}
\caption{Explicit structured dynamical unit. Ordered local interactions generate an intermediate vector $r_n$, which enters the causal state update. The scalar layer output $b_n$ is projected from the updated state. All within-step dependencies admit a predetermined evaluation order.}
\label{fig:structured_unit}
\end{figure}

Let $a_n$ denote the scalar input at time $n$ and $\xi_n\in\mathbb{R}^{d}$ the internal state. The concrete realization used in the experiments is
\begin{align}
z_{1,n} &= \tanh\!\left(A_1\begin{bmatrix}a_n\\ \bar\xi_n\end{bmatrix}+d_1\right), \\
z_{2,n} &= \tanh\!\left(A_2\begin{bmatrix}a_n\\ \bar z_{1,n}\\ \bar\xi_n\end{bmatrix}+d_2\right), \\
r_n &= \tanh\!\left(A_3\begin{bmatrix}a_n\\ \bar z_{1,n}\\ \bar z_{2,n}\end{bmatrix}+d_3\right),
\end{align}
where $\bar v$ denotes the arithmetic mean of the components of $v$. The state update is
\begin{equation}
\xi_{n+1}=\alpha\odot\xi_n+\gamma\odot\tanh\!\left(r_n+\kappa\xi_n\right),
\label{eq:stateupdate}
\end{equation}
and the scalar signal passed to the next layer is
\begin{equation}
b_n=m^\top\xi_{n+1},\qquad \|m\|_2=1.
\label{eq:layeroutput}
\end{equation}
The matrices $A_1,A_2,A_3$, biases $d_1,d_2,d_3$, retention vector $\alpha$, injection vector $\gamma$, feedback coefficient $\kappa$, and projection $m$ are fixed after initialization in the experiments; only the final readout is fitted.

The construction is a directed recurrent computational graph with an exposed dependency pattern. It should not be interpreted as defining a function class known to lie outside generic recurrent neural architectures: sufficiently flexible recurrent cells can represent comparable input--state maps. Its role is instead to fix a particular factorization that can be studied as an architectural bias under controlled conditions.

All computations are explicit and sequential within a time step. This schedulability is useful for feedforward stacking, but it is not claimed as a unique property of the model; ordinary acyclic computational graphs can of course be explicitly evaluated as well. The present formulation should therefore be viewed as a wave-inspired recurrent construction rather than as a complete wave-digital realization.
\section{Layered Architecture}
\label{sec:layered}

Structured units can be composed into feedforward stacks while each layer retains its own recurrent state. For a stack of depth $L$, let $a_n^{(1)}=a_n$ and let
\begin{equation}
a_n^{(\ell+1)}=b_n^{(\ell)}=m_\ell^\top\xi_{n+1}^{(\ell)},\qquad \ell=1,\ldots,L-1,
\end{equation}
where each state $\xi_n^{(\ell)}$ is updated according to Eq.~\eqref{eq:stateupdate} using its layer input $a_n^{(\ell)}$. The features exposed to the readout are the concatenated updated states
\begin{equation}
X_n=\begin{bmatrix}(\xi_{n+1}^{(1)})^\top & \cdots & (\xi_{n+1}^{(L)})^\top\end{bmatrix}^\top.
\end{equation}
In the experiments the prediction is a strictly linear function of the current external input and these fixed dynamical features,
\begin{equation}
\widehat y_n=w_0+w_u a_n+w_X^\top X_n.
\label{eq:readout}
\end{equation}

The layers do not share state variables. Each additional layer changes the factorization and the distribution of a fixed total state budget rather than merely adding a feedforward transformation. Hierarchically organized recurrent dynamics are not unique to the present architecture: DeepESNs and related models already establish that recurrent depth can alter memory and temporal representation \cite{gallicchio2017esp,gallicchio2017deep,gallicchio2018design,manneschi2021}. Accordingly, the controlled experiment below explicitly tests whether depth is beneficial once total state dimension is held fixed.
\section{Relation to Reservoir Computing and Structured Recurrent Models}

The proposed framework is closely related to reservoir computing because its internal dynamical state can be held fixed while a linear readout is trained. In classical echo-state networks the reservoir is often randomly initialized, but randomness is not essential to the RC paradigm: deterministic, sparse, cyclic, and otherwise designed reservoir topologies have been studied extensively \cite{rodan2011,dale2021,li2024}. Thus, the mere use of fixed internal dynamics or a prescribed topology is not a novelty claim of the present work.

The relation becomes even closer for layered models. Deep reservoir computing and DeepESNs explicitly study hierarchical compositions of recurrent layers, including their stability, memory, frequency response, and representation of multiple time scales \cite{gallicchio2017esp,gallicchio2017deep,gallicchio2018design,manneschi2021}. Consequently, the readout-only experiments below should be interpreted as a diagnostic of the features generated by the proposed units, rather than as evidence that readout-only learning or hierarchical temporal representation is new.

The intended object of study lies instead in the \emph{local interaction grammar}. Each unit is decomposed into an ordered sequence of state-modulated local transformations. Such explicit ordering is not unique to this architecture---ordinary recurrent cells and computational graphs can also be explicitly evaluated---so schedulability alone is not a novelty claim. The question is whether fixing this particular dependency pattern provides a useful constructive scaffold for layered recurrent modeling.

\begin{table}[t]
\centering
\small
\caption{Conceptual positioning relative to reservoir models. The entries describe typical or defining properties rather than exhaustive restrictions on each model family.}
\label{tab:positioning}
\begin{tabular}{p{0.22\linewidth}p{0.21\linewidth}p{0.21\linewidth}p{0.22\linewidth}}
\toprule
Property & Reservoir / ESN & Deep reservoir / DeepESN & Present architecture \\
\midrule
Trainable readout with fixed internal dynamics & Standard option & Standard option & Used in the experiments \\
Designed internal topology & Possible & Possible & Prescribed local interaction grammar \\
Multiple recurrent layers & Not required & Defining feature & Supported by stacking \\
Layer-dependent temporal dynamics & Possible & Well established & Possible; not claimed as unique \\
Within-cell factorization into ordered local maps & Architecture dependent & Architecture dependent & Imposed by construction \\
Wave-based physical interpretation & Not required & Not required & Design motivation only in this paper \\
Passivity / energy guarantee & Not generic & Not generic & Not imposed here \\
\bottomrule
\end{tabular}
\end{table}

The experiments below improve on an unmatched depth comparison by holding total state dimension and readout complexity fixed and by giving each model family the same calibration budget. The generic ESN provides a reference recurrent organization rather than a topology-only ablation: internal coefficient counts and graph structures are not identical. The results therefore support statements about comparative task-dependent bias under controlled state/readout/calibration budgets, but they do not identify a single causal graph edge or interaction as the source of a performance difference.

\paragraph{Parameter and update scaling.}
The local-interaction factorization also imposes a specific scaling with state dimension. For a layer of dimension $d$, the realization in Eqs.~(1)--(5) contains $2d+3d+3d$ matrix coefficients in $A_1,A_2,A_3$, $3d$ bias coefficients, and $d$ coefficients each in $\alpha$, $\gamma$, and $m$. In the implementation used here, the feedback coefficient $\kappa$ is shared across layers; hence a structured model with total state dimension $D$ contains $14D+1$ fixed internal scalar coefficients, independently of how $D$ is distributed across layers. Its per-sample state update therefore requires $O(D)$ arithmetic because each local affine map has fixed input dimension. By comparison, a conventional $D$-state ESN uses a $D\times D$ recurrent matrix, with $O(D^2)$ arithmetic in a dense implementation or $O(\mathrm{nnz}(W))$ when sparse structure is exploited. This scaling difference is not used as a matching criterion in the experiments and is not unique among constrained reservoirs; it is recorded here as a structural consequence of the proposed factorization.

\section{Computational Dependency and State Boundedness}

The dependency pattern of a structured unit is acyclic within each time step. Starting from known $a_n$ and $\xi_n$, the variables are obtained in the order
\[
(a_n,\xi_n)\longrightarrow z_{1,n}\longrightarrow z_{2,n}\longrightarrow r_n
\longrightarrow \xi_{n+1}\longrightarrow b_n.
\]
A feedforward stack remains explicitly schedulable because the output of layer $\ell$ is available before layer $\ell+1$ is evaluated. These are computational dependency properties; they do not by themselves imply passivity or an echo-state property. For the concrete state update used in the experiments, however, global state boundedness follows directly from the saturating injection.

\begin{proposition}[Global boundedness of the structured state]
Consider the state update in Eq.~\eqref{eq:stateupdate} and define $\bar\alpha=\|\alpha\|_\infty$. If $\bar\alpha<1$, then for every input sequence for which the local interaction maps are well defined and every initial state,
\begin{equation}
\|\xi_n\|_\infty
\leq
\bar\alpha^n\|\xi_0\|_\infty
+
\|\gamma\|_\infty\frac{1-\bar\alpha^n}{1-\bar\alpha}.
\label{eq:statebound}
\end{equation}
Consequently,
\begin{equation}
\limsup_{n\to\infty}\|\xi_n\|_\infty
\leq \frac{\|\gamma\|_\infty}{1-\bar\alpha},
\end{equation}
and the layer output $b_n=m^\top\xi_{n+1}$ is bounded as well.
\end{proposition}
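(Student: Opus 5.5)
The plan is to derive a one-step recursive inequality in the $\infty$-norm and then unroll it. Start from Eq.~\eqref{eq:stateupdate} and bound each component:
\[
|\xi_{n+1,i}|\le |\alpha_i|\,|\xi_{n,i}|+|\gamma_i|\,\bigl|\tanh(r_{n,i}+\kappa\xi_{n,i})\bigr|.
\]
The key observation is that $|\tanh(x)|\le 1$ for every real $x$. Hence the second term is at most $|\gamma_i|\le\|\gamma\|_\infty$, whatever the values of $r_n$, $\kappa$ and $\xi_n$. This is why the only hypothesis needed on the input is that the local maps are well defined: $r_n$ then exists as a finite real vector, and its size never enters the bound. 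Taking the maximum over $i$ gives the scalar recursion
\[
e_{n+1}\le \bar\alpha\, e_n+\|\gamma\|_\infty,\qquad e_n:=\|\xi_n\|_\infty .
\]

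The second step is to unroll this recursion by induction on $n$. The case $n=0$ of Eq.~\eqref{eq:statebound} is trivial. Assuming the bound at $n$, and using $\bar\alpha\ge 0$ to preserve the inequality under multiplication, we get
\[
e_{n+1}\le \bar\alpha^{n+1}e_0+\|\gamma\|_\infty\Bigl(\bar\alpha\tfrac{1-\bar\alpha^n}{1-\bar\alpha}+1\Bigr).
\]
The bracket simplifies to $\tfrac{1-\bar\alpha^{n+1}}{1-\bar\alpha}$, which is the geometric-sum identity. Here $\bar\alpha<1$ guarantees that the denominator is nonzero. The edge case $\bar\alpha=0$ should be checked separately under the convention $0^0=1$.

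For the $\limsup$ statement, let $n\to\infty$ in Eq.~\eqref{eq:statebound}. Since $0\le\bar\alpha<1$, we have $\bar\alpha^n\to0$, so the right-hand side converges to $\|\gamma\|_\infty/(1-\bar\alpha)$. Moreover the right-hand side is bounded uniformly in $n$ by $\max\{\|\xi_0\|_\infty,\ \|\gamma\|_\infty/(1-\bar\alpha)\}$, since it is a convex combination of these two quantities.

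Finally, bound the output using $\|m\|_2=1$:
\[
|b_n|=|m^\top\xi_{n+1}|\le\|m\|_2\|\xi_{n+1}\|_2\le\sqrt{d}\,\|\xi_{n+1}\|_\infty .
\]
Alternatively one can use $\|m\|_1\le\sqrt d$ together with Hölder's inequality. In either form this gives a uniform bound on $b_n$. No step is a genuine obstacle. The only points needing care are the norm conversion, which introduces the factor $\sqrt d$, and stating explicitly that nothing is assumed about $\kappa$ or about the magnitude of $r_n$, because the saturation of $\tanh$ absorbs both.
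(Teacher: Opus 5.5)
Your proof is correct and follows essentially the same route as the paper: the componentwise bound via $|\tanh|\le 1$, the scalar recursion $\|\xi_{n+1}\|_\infty\le\bar\alpha\|\xi_n\|_\infty+\|\gamma\|_\infty$ unrolled as a geometric sum, and the output bound $|b_n|\le\sqrt{d}\,\|\xi_{n+1}\|_\infty$ obtained from $\|m\|_2=1$. Your convex-combination remark also makes explicit the uniform-in-$n$ bound that the paper leaves implicit when it asserts that $b_n$ is bounded.
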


\begin{proof}
Because $|\tanh(v)|\leq 1$ componentwise, Eq.~\eqref{eq:stateupdate} gives
\[
|\xi_{i,n+1}|\leq |\alpha_i|\,|\xi_{i,n}|+|\gamma_i|.
\]
Taking the infinity norm yields
\[
\|\xi_{n+1}\|_\infty
\leq \bar\alpha\|\xi_n\|_\infty+\|\gamma\|_\infty.
\]
Iterating this scalar recursion gives Eq.~\eqref{eq:statebound}. Finally, $\|m\|_2=1$ implies $|b_n|\leq\|\xi_{n+1}\|_2\leq\sqrt{d}\,\|\xi_{n+1}\|_\infty$.
\end{proof}

The experimental hyperparameter search enforces $\alpha_i\leq 0.97$ for every structured realization, so the proposition applies to all structured models reported below. For this specific implementation the bound is independent of input amplitude: saturation of the state injection prevents the input from entering the state-growth term directly. It is nevertheless an absolute boundedness statement, not an incremental contraction result. Uniqueness of the driven state or an echo-state property would require a bound on differences between trajectories; related contractivity conditions for layered reservoirs are developed in \cite{gallicchio2017esp}. Passivity or energy non-expansiveness likewise requires additional structure not imposed here.

\section{Experimental Setup}
\label{sec:experimental}

The experiments are designed to test a narrower question than state-of-the-art prediction accuracy: whether the proposed recurrent factorization behaves as a useful inductive bias when total state dimension, readout complexity, and calibration effort are controlled.

\subsection{Tasks and data}
Two scalar-input/scalar-output tasks are used. The first is a custom nonlinear dynamical system. Its input combines two sinusoids, additive Gaussian noise, and eight exponentially decaying signed impulses. The input is normalized by the maximum absolute value computed on the training segment only. The target is generated recursively as
\begin{align}
y_n={}&0.72y_{n-1}-0.18y_{n-2}+0.24\tanh(1.8u_n) \\
&+0.10u_nu_{n-1}-0.08y_{n-1}u_n+0.05\tanh(2y_{n-1}).
\label{eq:customtarget}
\end{align}
The second task is NARMA10, generated from $u_n\sim\mathcal{U}[0,0.5]$ according to
\begin{equation}
y_{n+1}=0.3y_n+0.05y_n\sum_{i=0}^{9}y_{n-i}+1.5u_{n-9}u_n+0.1.
\label{eq:narma}
\end{equation}
Each realization contains 5000 samples, of which the first 3000 are used for readout fitting and the remaining 2000 for validation. The first 200 samples of each train/validation run are treated as washout and excluded from the error metric.

\subsection{Models and readout}
Three models are compared:
\begin{itemize}
\item \textbf{Structured-1}: one structured unit with 12 recurrent states;
\item \textbf{Structured-2}: two structured units with six recurrent states each, for 12 recurrent states in total;
\item \textbf{Generic-ESN}: a leaky echo-state network with 12 recurrent states.
\end{itemize}
Thus Structured-1 and Structured-2 have the same total state dimension; with the parameterization of Eqs.~(1)--(5), they also contain the same number of structured-unit scalar coefficients before the readout. The ESN is not coefficient-count matched because its recurrent matrix has a different sparse connectivity model. All three architectures expose 12 recurrent features to exactly the same readout in Eq.~\eqref{eq:readout}.

The readout is fitted by ridge regression with $\lambda=10^{-6}$:
\begin{equation}
\widehat w=\arg\min_w\|\Phi w-y\|_2^2+\lambda\|R^{1/2}w\|_2^2,
\end{equation}
where $\Phi=[\mathbf{1},u,X]$ and the intercept is not regularized. No quadratic state lifting or internal-parameter training is used in the reported comparison. The primary metric is normalized mean-square error (NMSE), $\mathrm{MSE}/\operatorname{var}(y)$, evaluated after washout.

\subsection{Equal-budget calibration and frozen test}
Calibration and test data are disjoint. For each task, three independent calibration datasets and ten model initializations per dataset are used. Each model family receives 40 random hyperparameter trials. All candidates within a family are evaluated using common base random draws, reducing ranking noise caused by initialization. The setting with the lowest median calibration NMSE is selected separately for each task and model family.

The structured search varies the common matrix scale, the lower and upper ranges for state retention $\alpha$ and injection $\gamma$, the feedback coefficient $\kappa$, and the bias scale. The ESN search varies spectral radius, input scale, bias scale, recurrent density, and leak. After calibration, the selected setting is frozen. Final evaluation uses five new independently generated test datasets and 30 independent model initializations per dataset, giving 150 test trials for each model and task. The supplied MATLAB script records all deterministic seeds and sampled hyperparameter settings.

\begin{table}[H]
\centering
\small
\caption{Hyperparameter settings selected on the disjoint calibration sets and frozen for final testing. Ranges denote the intervals from which componentwise $\alpha$ and $\gamma$ values are sampled at initialization.}
\label{tab:selected_hyperparameters}
\begin{tabular}{llccccc}
\toprule
Task & Model & $A$ scale & $\alpha$ range & $\gamma$ range & $\kappa$ & Bias scale \\
\midrule
Custom & Structured-1 & 0.636 & [0.575,0.798] & [0.160,0.239] & 0.176 & 0.050 \\
Custom & Structured-2 & 0.633 & [0.505,0.616] & [0.156,0.283] & 0.243 & 0.085 \\
NARMA10 & Structured-1 & 0.691 & [0.493,0.693] & [0.331,0.430] & 0.530 & 0.134 \\
NARMA10 & Structured-2 & 0.606 & [0.637,0.802] & [0.319,0.375] & 0.386 & 0.159 \\
\bottomrule
\end{tabular}

\vspace{0.6em}
\begin{tabular}{llccccc}
\toprule
Task & Model & Spectral radius & Input scale & Bias scale & Density & Leak \\
\midrule
Custom & Generic-ESN & 0.277 & 0.598 & 0.196 & 0.284 & 0.750 \\
NARMA10 & Generic-ESN & 1.101 & 0.137 & 0.031 & 0.525 & 0.523 \\
\bottomrule
\end{tabular}
\end{table}

This protocol matches total recurrent state dimension, readout form, training data length, and hyperparameter-search budget. It does not make the ESN and structured units graph-isomorphic or equalize every internal coefficient, so the comparison should be interpreted as a controlled family-level comparison rather than a topology-only causal ablation.

\section{Results}

\subsection{Aggregate validation performance}
Table~\ref{tab:controlled_results} reports the final frozen-hyperparameter results. Figures~\ref{fig:custom_summary} and \ref{fig:narma_summary} show the corresponding mean validation NMSE and one standard deviation across the 150 test trials.

\begin{table}[t]
\centering
\small
\caption{Validation NMSE after equal-budget calibration and frozen-hyperparameter testing. Each entry summarizes 5 independent test datasets $\times$ 30 model initializations.}
\label{tab:controlled_results}
\begin{tabular}{llcc}
\toprule
Task & Model & Mean $\pm$ std. & Median [IQR] \\
\midrule
Custom & Structured-1 & $(2.761\pm0.374)\!\times\!10^{-4}$ & $2.698\,[2.500,2.917]\!\times\!10^{-4}$ \\
Custom & Structured-2 & $(3.187\pm1.034)\!\times\!10^{-4}$ & $2.954\,[2.582,3.533]\!\times\!10^{-4}$ \\
Custom & Generic-ESN  & $(3.940\pm2.563)\!\times\!10^{-4}$ & $3.516\,[2.256,4.833]\!\times\!10^{-4}$ \\
\midrule
NARMA10 & Structured-1 & $0.348\pm0.035$ & $0.360\,[0.324,0.371]$ \\
NARMA10 & Structured-2 & $0.357\pm0.042$ & $0.364\,[0.323,0.389]$ \\
NARMA10 & Generic-ESN  & $0.312\pm0.077$ & $0.293\,[0.274,0.321]$ \\
\bottomrule
\end{tabular}
\end{table}

\begin{figure}[t]
\centering
\includegraphics[width=0.76\linewidth]{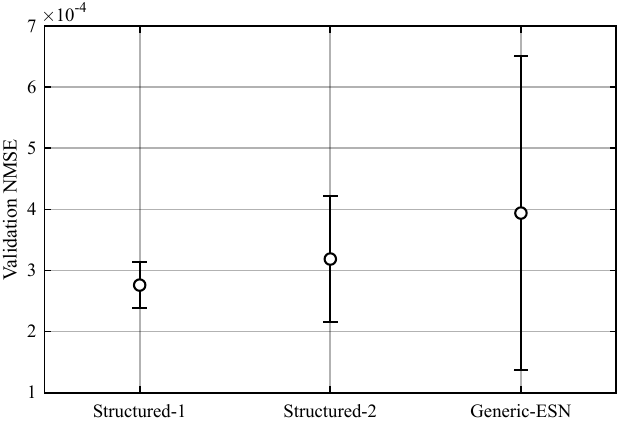}
\caption{Custom nonlinear system: validation NMSE after calibration on disjoint data and frozen-hyperparameter testing. Markers show the mean and error bars one standard deviation over 150 trials. All models use 12 recurrent states and the same linear ridge readout.}
\label{fig:custom_summary}
\end{figure}

For the custom nonlinear system, Structured-1 achieves the lowest mean NMSE, approximately 30\% below that of the generic ESN. Structured-2 is also below the ESN mean, but it is approximately 15\% worse than Structured-1. The structured models also exhibit lower trial-to-trial dispersion in this task. The result therefore supports the existence of a useful task-specific structural bias, but it does not support a benefit from depth under a fixed total state dimension.

\begin{figure}[t]
\centering
\includegraphics[width=0.76\linewidth]{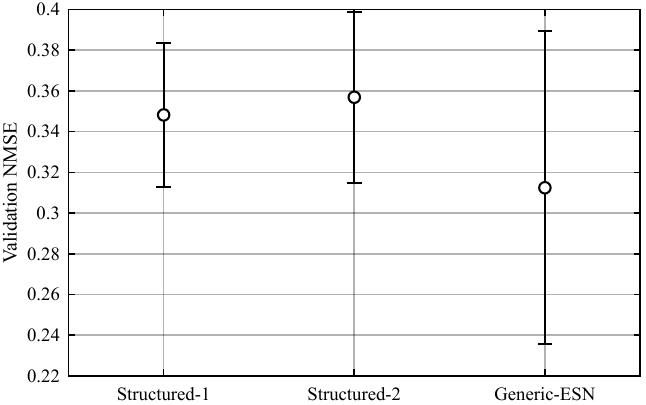}
\caption{NARMA10: validation NMSE under the same controlled protocol as Fig.~\ref{fig:custom_summary}. The generic ESN has the lowest average and median error, while the two structured depths are comparable to each other.}
\label{fig:narma_summary}
\end{figure}

On NARMA10, the ordering reverses. The generic ESN obtains a mean NMSE of 0.312, about 10\% below Structured-1 and 12\% below Structured-2. Structured-2 again does not improve on Structured-1. The ESN has greater variability across initializations, but its median remains clearly lower. Taken together, the two tasks do not support a universal superiority claim for either recurrent organization.

\subsection{Representative post-washout trajectories}
Figures~\ref{fig:custom_trace} and \ref{fig:narma_trace} provide qualitative views from the fixed representative trial defined deterministically by the first test dataset and first initialization. These plots are illustrative only; all quantitative conclusions use the full 150-trial distributions summarized above.

\begin{figure}[t]
\centering
\includegraphics[width=0.88\linewidth]{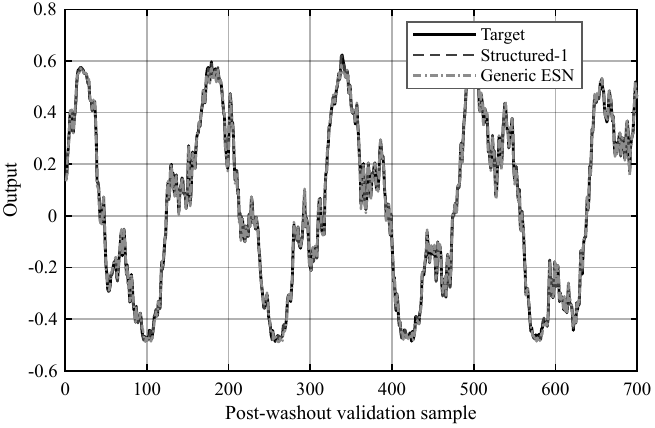}
\caption{Custom nonlinear system: representative post-washout validation trajectory for Structured-1 and the generic ESN. The trial is fixed by dataset index 1 and initialization index 1 rather than selected by performance.}
\label{fig:custom_trace}
\end{figure}

For the custom system, both models visually track the target closely over the displayed segment, so the aggregate statistics in Fig.~\ref{fig:custom_summary} are more informative than visual inspection alone. The trajectory nevertheless shows that a fixed structured recurrent feature map followed by a linear readout can reproduce the principal temporal variations of the target.

\begin{figure}[t]
\centering
\includegraphics[width=0.88\linewidth]{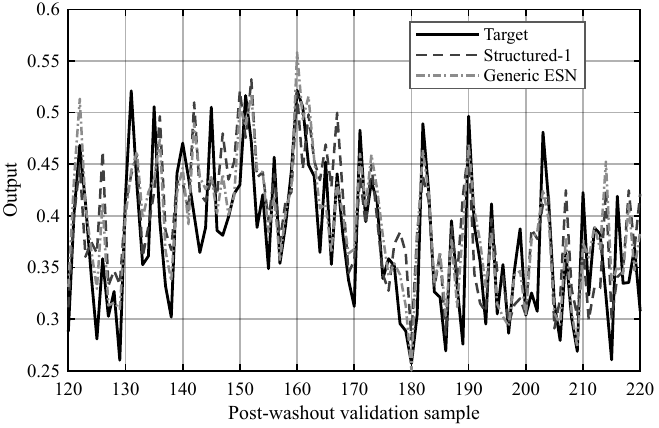}
\caption{NARMA10: detail from the same fixed representative-trial protocol. For readability, an arbitrarily selected post-washout interval (samples 120--220) is displayed; the interval was not chosen to optimize the relative appearance of either model.}
\label{fig:narma_trace}
\end{figure}

The NARMA10 detail makes the model differences easier to see than a 700-sample overview. Neither prediction reproduces the target perfectly: both smooth or miss some rapid excursions, and the aggregate NMSE is therefore essential for judging performance. The zoom is used only for readability and does not enter model selection or quantitative evaluation.

\subsection{Numerical conditioning}
The fitted readouts can be numerically ill-conditioned despite ridge regularization. Median condition numbers of the regularized normal matrices range from approximately $3\times10^8$ to $1\times10^{10}$ across the six task/model combinations. Because the same fixed $\lambda=10^{-6}$ and the same readout construction are used for all models, the controlled comparison remains internally consistent, but absolute performance values can be sensitive to feature scaling and regularization. As a robustness check, we repeated the frozen-dynamics evaluation for $\lambda\in\{10^{-7},10^{-6},10^{-5},10^{-4},10^{-3}\}$ without retuning any recurrent hyperparameters. The qualitative family ordering was unchanged throughout the sweep: Structured-1 retained the lowest mean NMSE on the custom task, while the generic ESN retained the lowest mean NMSE on NARMA10. Thus, the main comparative conclusions do not depend on the particular choice $\lambda=10^{-6}$, although absolute errors and readout norms vary with regularization.
\section{Discussion}

The controlled experiment supports a specific interpretation of the proposed factorization: it acts as a task-dependent inductive bias rather than as a universally superior recurrent organization. On the custom nonlinear system Structured-1 is more accurate on average and less variable than the generic ESN, whereas on NARMA10 the ESN is more accurate. The regularization sweep preserves this reversal, so it is not an artifact of the nominal ridge coefficient.

Depth is likewise not a general explanation of performance. Structured-1 and Structured-2 use the same total number of recurrent states, the same readout, and the same calibration budget, yet Structured-2 is worse on average in both reported tasks. This is compatible with the reservoir literature, where the effect of depth depends on operating regime, topology, task, and the allocation of memory across layers \cite{gallicchio2017esp,gallicchio2017deep,gallicchio2018design,manneschi2021,gallicchio2019topology}.

The ESN comparison remains a family-level comparison rather than a topology-only causal ablation: matching state dimension, readout, data, and search effort does not make the two recurrent graphs isomorphic or equalize all internal coefficients. The observed differences therefore characterize the model families under the stated resource controls, not the causal effect of any single interaction edge.

The relation to wave digital modeling is correspondingly focused. The present maps adopt a local-interaction viewpoint and explicit scheduling, while scattering relations, passivity, and energy conservation are not imposed. Learnable WDFs already demonstrate that data-driven components can coexist with physically organized wave-digital models \cite{chowdhury2022,massi2023,massi2024,longo2026}. The role of the present study is to isolate the ordered recurrent grammar itself. Its mixed empirical outcome is informative: explicit organization alone can favor particular input--output dependencies, while any stronger physical or system-theoretic claims would require additional constraints not studied here.
\section{Conclusion}

We introduced an explicitly schedulable recurrent interaction unit and studied it as a fixed dynamical feature generator. The architecture has linear parameter and update scaling with total structured-state dimension, and its saturating state update admits a global boundedness guarantee under the initialization regime used here. Experimentally, the comparison controls total recurrent state dimension, readout complexity, calibration effort, and train/test separation.

The results support a deliberately limited conclusion: the prescribed recurrent organization behaves as a task-dependent inductive bias. Structured-1 outperforms the calibrated ESN on the custom nonlinear identification task, while the ESN performs better on NARMA10; allocating the same 12-state budget across two structured layers does not improve either task. The present study therefore characterizes what follows from this specific recurrent factorization--explicit local interaction, feedforward stacking, bounded state evolution, and controlled empirical behavior--while scattering, passivity, energy constraints, and stronger system-theoretic guarantees remain outside its scope.
\section*{Acknowledgment}
The author used generative AI tools for language editing and assistance in organizing the presentation of the manuscript. The author reviewed and verified all mathematical derivations, literature references, experimental results, and scientific conclusions and takes full responsibility for the content.

\bibliographystyle{IEEEtran}
\bibliography{references_v3}

\end{document}